\date{}
\newtheorem{theorem}{Teorema} %pag 128 
\newtheorem{proposition}{Proprieta'}
\newtheorem{definition}{Definizione}
\newtheorem{notation}{Nota}
\newtheorem{ex}{Esercizio} 
\newtheorem{esempio}{Esempio}
\newcommand{\vs}{\vspace{3mm}}
\newcommand{\no}{\noindent} 
\newcommand{\beq}{\begin{equation}} 
\newcommand{\eeq}{\end{equation}}
\newcommand{\bex}{\begin{ex}} 
\newcommand{\eex}{\end{ex}} 
\newcommand{\bese}{\begin{esempio}} 
\newcommand{\eese}{\end{esempio}} 
\newcommand{\bpro}{\begin{proposition}} 
\newcommand{\epro}{\end{proposition}}
\newcommand{\bthe}{\begin{theorem}} 
\newcommand{\ethe}{\end{theorem}}
\newcommand{\bnote}{\begin{notation}} 
\newcommand{\enote}{\end{notation}}
\newcommand{\bdefi}{\begin{definition}} 
\newcommand{\edefi}{\end{definition}} 
\newcommand{\bc}{\begin{center}} 
\newcommand{\ec}{\end{center}}
\newcommand{\mail}[1]{\href{unina:#1}{\texttt{#1}}}
\author{Monica De Angelis, P. Renno \thanks{Univ. of Naples  "Federico II", Faculty of Engineering, Dip. Mat. Appl. "R.Caccioppoli", \newline
 Via Claudio n.21, 80125, Naples, Italy.
\newline\mail{modeange@unina.it}}}
\title{ Asymptotic effects of boundary perturbations in excitable systems}
\begin{document}
\maketitle

\begin{abstract}

A Neumann  problem in the strip for the   Fitzhugh Nagumo system  is considered. The  transformation in a non linear integral equation permits to deduce a priori estimates for the solution. A complete asymptotic analysis shows that for large $ t $    the  effects of the initial data vanish while the effects of boundary disturbances  $ \varphi_1 (t), $ $ \varphi_2(t) $ depend on the properties of the data. When $ \varphi_1 , $ $ \varphi_2 $  are convergent for large $ t $ , the solution is everywhere bounded; when  $ \dot \varphi_i \,\,   \in L^1 (0,\infty) (i=1,2)$  too, the effects are vanishing.
\end{abstract}

%The title of your section 1
\section{Introduction}

Aim of the paper is  the asymptotic analysis of the solution of the Fitzhugh Nagumo  system  (FHN) for a strip ploblem with  Neumann conditions. Some applications are related to the theory of excitable systems; in particular the cases of pacemakers \cite{ks} and when two species reaction-diffusion systems is governed by flux boundary condition \cite{m2}. Moreover, Neumann conditions are applied also  in the distibuted FHN system.\cite{ns}.Several aspects concerning the FHN model are discussed in previous paper \cite{mda2010,dr08,drw}. Moreover, owing to the equivalence between the FHN model and the equation of superconductivity, other applications have been analyzed. 
\cite{ddf} - \cite{df13}, \cite{acscott,acscott02}.

The present paper analyzes a transformation of the FHN  model in a suitable non linear integral-equation (see \ref{36}) whose kernel is a Green function which has numerous  basic properties typical of the diffusion equation. Those properties imply a priori estimates  and so theorems on behaviour of the solution for large t can be obtained.   

%The title of your section 2
\section{Statement of the  problem}

Let $ \, u(x,t)\, $  be a trasmembrane potential and let $\,v(x,t)\,$  be a  variable associated with the contributions to the membrane current from sodium , potassium and other ions. The  well known FHN system   \cite{i,ks,m1,m2,acscott,acscott02} is 

\vs 
\beq     \label{21}
  \left \{
   \begin{array}{lll}
    \displaystyle{\frac{\partial \,u }{\partial \,t }} =\,  \varepsilon \,\frac{\partial^2 \,u }{\partial \,x^2 }
     \,-\, v\,\,  + f(u ) \,  \\
\\
\displaystyle{\frac{\partial \,v }{\partial \,t } }\, = \, b\, u\,
- \beta\, v\,
   \end{array}
  \right.
 \eeq

\vs \vs \noindent where $ \varepsilon \, > \, 0\, $ is a diffusion coefficient related to the  axial current in the axon, while $ \, b\,$ and $\, \beta \, $ are positive constants that characterize the model's kinetic. Further

\beq      \label{22}
f(u)\, =\,  u \, ( a- u ) \, ( u-1)    \qquad  \,(0<a<1).\, 
\eeq

\vs
Assuming  $ T  $ as an arbitrary positive value, a typical example of problems which takes into account either initial perturbations and  boundary perturbations is defined in 

\[
\,   \Omega \, \equiv \{\,(x,t) : \, 0\,\leq \,x \,\leq L \,\,;  \ 0<t\,<T\, \} \]

 \no by

\vs \no 
\beq      \label{23}
u (x,0)\, = u_0(x),\,\qquad \,\,\, \,v (x,0)\, = v_0(x)\, \,\,\, 
\eeq

\vs \no 
with the Neumann conditions

\beq   \label{24}
u_x(0,t)\,=\, \varphi  _1(t)  \qquad u_x(L,t)\,=\,\varphi _2(t). 
\eeq

\vspace{3mm}It can be easiy verified (see,f.i\cite{mda2010,dr08}) that the problem can be analyzed by means of  an  integral differential problem  with a single  unknown function $ u(x,t) $. In fact, if $ F $  denotes the function:

\beq   \label{25}
\,\, F(x,t,u)\, =\,u^2\, (\,a+1\,-u\,)   \, - \, v_0 \,\, e^{-\,\beta \,t}
\eeq

\vs \no  by (\ref{21}) - (\ref{22})   one has

\beq   \label{26}
%\left \{
  % \begin{array}{lll}
   u_t -  \varepsilon  u_{xx} + au +b \int^t_0  e^{- \beta (t-\tau)}\, u(x,\tau) \, d\tau \,=\, F(x,t,u) \, \qquad  (x,t) \in \Omega \,  \\\eeq    

\vs \no  with $u $ that has to satisfy the initial - boundary conditions  $(\ref{23})_1,$ (\ref{24}).

\vs
\vs\no As soon as $u(x,t) $ is determined, the $v(x,t)\,$ component will be given by

\beq      \label{27}
v\,(x,t) \, =\,v_0 \, e^{\,-\,\beta\,t\,} \,+\, b\, \int_0^t\, e^{\,-\,\beta\,(\,t-\tau\,)}\,u(x,\tau) \, d\tau,
\eeq

\no where $ v_0 $ is  defined in $(\ref{23})_2$.

\vs \no When source term $ F $ in  (\ref{26}) is  a prefixed  function depending only on $\, x\,$  and $\, t,\, $ then  the initial-boundary problem  (\ref{26}),$\,\,\,(\ref{23})_1, \,\,(\ref{24})$ is linear and it can be solved explicitly by means of the Laplace transform. Moreover, when $ F $  depends on the unknown $ u(x,t)  $ too, then by (\ref{26}) one obtains  an integral equation useful to study the differential problem.

\section{Previous results}

The fundamental solution $ \, K(x,t) \, $ of the parabolic operator defined by (\ref{26}) has  been already determined explictly in \cite {dr08} and is given by

\beq     \label {31}
%\begin{split}
K(r,t)=  \frac{1}{2 \sqrt{\pi  \varepsilon } }\,\biggl[ \frac{ e^{- \frac{r^2 }{4 t}-a\,t\,}}{\sqrt t}
 -\,b \int^t_0  \frac{e^{- \frac{r^2}{4 y}\,- a\,y}}{\sqrt{t-y}}  \,\ e^{-\beta ( t \,-y\,)}  J_1 (2 \sqrt{\,by(t-y)\,}\,)dy \biggr]
%\end{split}
\eeq

\vs\no where $\, r\,= |x| \, / \sqrt \varepsilon \, \, $ and   $ J_n (z) \,$    denotes the Bessel function of first kind and order $\, n.\,$ Moreover, one has  \cite{dr08}:

\vspace{3mm}\begin{theorem}

For all $t>0$, the Laplace transform of  $\,K (r,t)\, \,$  with respect to $\,t\,$ converges absolutely in the half-plane $ \Re e  \,s > \,max(\,-\,a ,\,-\beta\,)\,$ and it results:

\vs
\beq      \label{32}
\,\hat K\,(r,s)  =\,\int_ 0^\infty e^{-st} \,\, K\,(r,t) \,\,dt \,\,=  \, \frac{e^{- \,r\,\sigma}}{2 \, \sqrt\varepsilon \,\sigma \,  }
\eeq

\vs \no with $\,\sigma^2 \ \,=\, s\, +\, a \, + \, \frac{b}{s+\beta}.$ 
\end{theorem}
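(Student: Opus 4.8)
The plan is to compute the Laplace transform of the right-hand side of (\ref{31}) term by term and show that the two contributions recombine into $\frac{e^{-r\sigma}}{2\sqrt\varepsilon\,\sigma}$. The guiding observation is that the claimed transform (\ref{32}) is \emph{exactly} the transform of the pure-diffusion kernel $K_0(r,t)=\frac{e^{-r^2/(4t)-at}}{2\sqrt{\pi\varepsilon\,t}}$ (the first term of (\ref{31})) after the substitution $s\mapsto s+\frac{b}{s+\beta}$, since $\sigma^2=(s+a)+\frac{b}{s+\beta}$. This substitution is the operational signature of the memory term $b\int_0^t e^{-\beta(t-\tau)}u\,d\tau$ in (\ref{26}), so the whole computation amounts to making it rigorous.

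First I would dispatch the first term. Using the classical pair $\int_0^\infty e^{-st}\,\frac{e^{-r^2/(4t)}}{\sqrt{\pi t}}\,dt=\frac{e^{-r\sqrt s}}{\sqrt s}$ (absolutely convergent for $\Re s>0$, the endpoint $t=0$ being harmless since $e^{-r^2/(4t)}$ kills the singularity) together with the shift $s\mapsto s+a$ forced by the factor $e^{-at}$, the first term of (\ref{31}) transforms to $\frac{e^{-r\sqrt{s+a}}}{2\sqrt\varepsilon\,\sqrt{s+a}}$, convergent for $\Re s>-a$.

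The heart of the proof is the convolution term. Here I would expand the Bessel factor, $J_1(2\sqrt{b y(t-y)})=\sum_{m\ge0}\frac{(-1)^m}{m!(m+1)!}\big(b y(t-y)\big)^{m+1/2}$, so that after dividing by $\sqrt{t-y}$ each summand is an ordinary $t$-convolution of $y^{\,m+1/2}e^{-ay-r^2/(4y)}$ against $(t-y)^{m}e^{-\beta(t-y)}$. The convolution theorem, together with the elementary transforms $\mathcal L\big[e^{-\beta w}w^{m}/m!\big]=(s+\beta)^{-m-1}$ and $\mathcal L\big[y^{\,k-1/2}e^{-ay-r^2/(4y)}\big]=(-1)^k\frac{d^k}{ds^k}\big[\sqrt{\pi/(s+a)}\,e^{-r\sqrt{s+a}}\big]$, converts the series (up to the common factor $\frac{1}{2\sqrt\varepsilon}$) into $\sum_{k\ge1}\frac{1}{k!}\big(\tfrac{b}{s+\beta}\big)^k\frac{d^k}{ds^k}\big[\frac{e^{-r\sqrt{s+a}}}{\sqrt{s+a}}\big]$. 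Since $\frac{e^{-r\sigma}}{\sigma}$ depends on $s$ and on $\frac{b}{s+\beta}$ only through the combination $(s+a)+\frac{b}{s+\beta}$, this is precisely the Taylor expansion in the variable $\frac{b}{s+\beta}$ of $\frac{e^{-r\sigma}}{2\sqrt\varepsilon\,\sigma}-\frac{e^{-r\sqrt{s+a}}}{2\sqrt\varepsilon\,\sqrt{s+a}}$; adding back the first term yields (\ref{32}). A cleaner, series-free route to the same identity is Efros's theorem with $q(s)=s+\frac{b}{s+\beta}$: one checks that
\[
g(t,\tau):=\mathcal L^{-1}\big[e^{-\tau q(s)}\big]=e^{-\beta(t-\tau)}\Big[\delta(t-\tau)-\sqrt{\tfrac{b\tau}{t-\tau}}\,J_1\big(2\sqrt{b\tau(t-\tau)}\big)\Big]\mathbf 1_{\{t>\tau\}},
\]
so that $K(r,t)=\int_0^\infty K_0(r,\tau)\,g(t,\tau)\,d\tau$ reproduces the two terms of (\ref{31}) at once.

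The main obstacle is rigour rather than discovery: I must justify the interchange of the summation (or the $\tau$-integration) with the Laplace integral, and pin down the exact abscissa of absolute convergence. For the latter, the branch $\sqrt{s+a}$ forces $\Re s>-a$ while the rational perturbation $\frac{b}{s+\beta}$ forces $\Re s>-\beta$. Dominating the Bessel kernel by $|J_1(x)|\le\frac{|x|}{2}$ bounds the convolution integrand by a constant times $e^{-ay}e^{-\beta(t-y)}$ times integrable algebraic factors; after the change of variables $(y,t-y)\mapsto(y,w)$ the double integral factors and converges iff $\Re s+a>0$ and $\Re s+\beta>0$, i.e. exactly on $\Re s>\max(-a,-\beta)$, as asserted.
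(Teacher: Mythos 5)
The paper itself contains no proof of this theorem: formula (\ref{32}) is simply imported from \cite{dr08}, so your argument has to stand on its own. On its own terms the strategy is the right one, and essentially the standard operational-calculus derivation: the Gauss--Weierstrass pair with the shift $s\mapsto s+a$ disposes of the first term of (\ref{31}); the convolution term is handled either by termwise transforming the $J_1$ series or by the composition (Efros) theorem with $q(s)=s+\frac{b}{s+\beta}$; and your domination $|J_1(x)|\le |x|/2$ followed by the change of variables $w=t-y$ factorizes the dominating double integral into $\int_0^\infty\sqrt{y}\,e^{-(\Re s+a)y-r^2/(4y)}\,dy\cdot\int_0^\infty e^{-(\Re s+\beta)w}\,dw$, which correctly identifies the half-plane of absolute convergence $\Re s>\max(-a,-\beta)$.

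Two points need attention. First, a loose factor $\sqrt b$: with the kernel exactly as printed in (\ref{31}), whose integral term carries the prefactor $b$, your computation does \emph{not} give (\ref{32}); each term of the expanded Bessel factor contributes $b^{m+1/2}$, and together with the external $b$ this yields $\hat K=\frac{e^{-r\sqrt{s+a}}}{2\sqrt\varepsilon\sqrt{s+a}}+\sqrt b\,\bigl[\frac{e^{-r\sigma}}{2\sqrt\varepsilon\,\sigma}-\frac{e^{-r\sqrt{s+a}}}{2\sqrt\varepsilon\sqrt{s+a}}\bigr]$. Your series bookkeeping, and likewise your Efros kernel $g(t,\tau)$ (note $\frac{1}{\sqrt\tau}\sqrt{\frac{b\tau}{t-\tau}}=\frac{\sqrt b}{\sqrt{t-\tau}}$), silently assume the prefactor is $\sqrt b$. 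That is in fact the correct kernel --- it is the only choice consistent with (\ref{32}) and with the estimate (\ref{42}) --- so (\ref{31}) contains a typo, but a proof must say this explicitly rather than absorb it. Second, the series route as written does not by itself establish the identity on the whole half-plane: the Taylor expansion of $p\mapsto e^{-r\sqrt p}/\sqrt p$ about $p=s+a$ in the increment $h=\frac{b}{s+\beta}$ represents the function only for $|h|<|s+a|$ (distance to the branch point), and the interchange of sum and integral is dominated by the $I_1$ series, which grows like $e^{\sqrt b\,t}$; both confine you to $\Re s$ large, after which you need analytic continuation (legitimate, since both sides are analytic in the half-plane and $\Re\sigma^2=(\Re s+a)+b\,\frac{\Re s+\beta}{|s+\beta|^2}>0$ there, so the principal branch of $\sigma$ causes no trouble). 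Your Efros route avoids this entirely: Fubini is justified in the full half-plane by the $|J_1(x)|\le|x|/2$ domination, the inner pair $\mathcal L[g(\cdot,\tau)]=e^{-\tau q(s)}$ holds for $\Re s>-\beta$ because the relevant dominating function grows only like $e^{2\sqrt{b\tau t}}$ in $t$, and the outer integral converges exactly because $\Re q(s)+a=\Re\sigma^2>0$. So promote Efros to the main argument, keep the series as heuristic, and state the $\sqrt b$ correction; with those adjustments the proof is complete.
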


Let us  now consider the following  Laplace transforms with respect to $\,t\,$:

\vs
\[
\hat u (x,s) \, = \int_ 0^\infty \, e^{-st} \, u(x,t) \,dt \,\,, \,\,\,  \,\hat F (x,s)   \, = \int_ 0^\infty \,\, e^{-st} \,\, F\,[x,t,u (x,t)\,] \,dt \,,\
\]

\vs \no  and let $\hat \varphi_1(s ), \,\,\, \hat \varphi_2(s)\,\, $  be the  ${ L} $  transforms of the  data $ \varphi_i(t ) \,\,(i=1,2).\, $

\vs \noindent  Then the Laplace transform of the problem (\ref{26}),$\,\,\,(\ref{23})_1, \,\,(\ref{24})$  is formally given by:

\beq   \label{33}
\left \{
   \begin{array}{lll}
  \hat u_{xx}  \,\,- \dfrac{\sigma^2}{\varepsilon} \,\,\hat u =\, -\,\dfrac{1}{\varepsilon} \,\,[\, \,\hat F(x,s,\hat u(x,s)) +u_0(x)\,\,]\\    
\\
  \,\hat u_x(0,s)\,=\, \hat \varphi_1\,(s)\qquad \hat u_x(L,s)\,=\,\hat \varphi_2\,(s).
   \end{array}
  \right.
\eeq

\vspace{3mm}\noindent   If one introduces the following theta function

\beq\,  \label{34}
\begin{split}
\displaystyle
\hat \theta \,(\,y,\sigma)\,= 
\\ & \frac{1}{2 \,\, \sqrt\varepsilon \,\,\,\sigma  } \, \biggl\{\, e^{- \frac{y}{\sqrt \varepsilon} \,\,\sigma}+\, \sum_{n=1}^\infty \,\, \biggl[ \,e^{- \frac{2nL+y}{\sqrt \varepsilon} \,\,\sigma} \, +\, e^{- \frac{2nL-y}{\sqrt \varepsilon} \,\,\sigma}\,
\biggr] \, \biggr\}
\\ \\& =\dfrac{\cosh\,[\, \sigma/\sqrt{\varepsilon} \,\,(L-y)\,]}{\,2\, \, \sqrt{\varepsilon} \,\, \sigma\,\,\, \sinh\, (\,\sigma/\sqrt{\varepsilon}\,\, \,L\,)}\,\,=
\\&
\end{split}\eeq

\vspace{3mm}\noindent then, by  (\ref{33}) and (\ref{34}) one deduces:

\beq     \label{255}
\begin{split}
\hat u (x,s) = &\,\int _0^L \, [\,\hat \theta\,(\,|x-\xi|, \,s\,)\,+\,\,\,\hat \theta\,(\,|x+\xi|,\, s\,)\,] \, \,[\,u_0(\,\xi\,) \,+\,\hat F(\,\xi,s, \hat u(x,s)\,]\,d\xi\,  
\\
\\ & - \,  \,\,\ 2 \,\,\varepsilon \, \,\hat \varphi_1 \,(s) \,\, \hat  \theta (x,s)\,+ \, 2 \,\, \varepsilon  \,\, \hat \varphi_2 \, (s)\,\,\hat  \theta \,(L-x,s\,).\, \,\, 
\end{split}
\eeq

\vspace{3mm}\noindent  Owing to dependence of source term $ F $ on the unknown, obviously all this is purely formal. However, if one puts

\beq      \label{35}
\left \{
   \begin{array}{lll}
 \theta (x,t) \,=\displaystyle  \sum_{n= -\infty }^\infty \, \, K(x \,+2nL,\,t) \, 
 \\
 \\
 G(x,\xi, t) \, = \,  \theta \,(\,|x-\xi|,\, t\,)\,+ \,  \theta \,(\,|x+\xi|,\,t\,)
     \end{array}
  \right.   
\eeq

\vspace {3mm}\noindent by (\ref{255})  one deduces \cite{mda2010}:

\beq   \label{36}
\begin{split}
 u(\, x,\,t\,)\, =
 \\ & \,\,\int^L_0 \, G(x,\xi,t)\, \,u_0(\xi)\,d\,\xi\,\,- \,2 \, \varepsilon \,\int^t_0 \theta\, (x,\, t-\tau) \,\,\, \varphi_1 (\tau )\,\,d\tau\,
\\ \\& +\, 2\,\, \varepsilon \int^t_0 \theta\, (L-x,\, t-\tau) \,\,\, \varphi _2 (\tau )\,\,d\tau\,\\ \\
 &+\, \,\int^t_0 d\tau\int^L_0 \, \,G(x,\xi,t-\tau)\, \,\,\, F\,[\,\xi,\tau, u(\,\xi,\tau\,)\,] \,\,d\xi.
 \end{split}
\eeq

\vspace{3mm}\noindent which represents an integral equation for the unknown $ u(x,t). $

\section{Basic estimates for the kernels $ K(x,t)  $  and $ \theta(x,t) $}

The behaviour for large $ t $  of the terms depending on the initial data and the source $ F $  has been already analyzed in \cite{mda2010}\cite{mda13}. Now the effects of the boundary perturbations $ \varphi_1,\,\, \varphi_2 $  will be estimated. For this an appropriate analysis of the kernels   $\  K(x,t) \, $   and $ \theta(x,t)  $ will be considered.

As for  $\  K(x,t), \, $   in \cite{dr08} has been proved that

\vspace{3mm}

\beq               \label{42}
|K| \, \leq \, \frac{e^{- \frac{r^2}{4 t}\,}}{2\,\sqrt{\pi \varepsilon t}} \,\, [ \, e^{\,-\,at}\, +\, b t \,E(t)\, ] 
\eeq

\noindent where

\vs
\beq      \label{41} 
E(t) \,=\, \frac{e^{\,-\,\beta t}\,-\,e^{\,-\,at}}{a\,-\,\beta}\,\,>0\,.
\eeq

Further, it results too:

%\beq               \label{43}
%\int_\Re\,\,|K(x-\xi,t)|\,d\xi\,\,\leq \, e^{\,-\,at}\, +\, \sqrt b\, \pi \,t \,  \, e^{\,-\,\frac{\beta +a }{2}\,t} \,\, I_0 \biggl( \frac{\beta -a }{2}\,t \biggr).
%\eeq

\vs
\beq               \label{45}
\int_\Re\,\,|K(x-\xi,t)|\,d\xi\,\,\leq \, e^{\,-\,at}\, +\, \sqrt b\, \pi \,t \,  \, e^{\,-\,\omega \, t } \,
\eeq

\vs
\beq               \label{46}
   \int_0^t\,d \tau\, \int_\Re |K(x-\xi,t)| \, d\xi \leq \,  \beta_0. 
\eeq

\vspace{3mm}\noindent with

\beq \label{44}
 \, \omega = min \,  ( a, \beta )  \qquad \qquad  \beta_0 \,= \frac{1}{a}\, +\, \pi \sqrt b \, \, \displaystyle {\frac{a+\beta}{2(a\beta)^{3/2}}}.\,\,
\eeq

\vs \no Now, if  $ \Gamma (x)   $ is the gamma function and $ \zeta(x)  $  the Riemann's Zeta function,   let

\beq   \label{411}
 C_0\, = \, \frac{1}{ 2 \sqrt{\varepsilon \, \omega }}\,+\,\frac{ b\,\,\Gamma ( 3/2) \,\,\omega^{-\, 3/2}}{2 \sqrt{\pi \, \varepsilon } \,\,|a-\beta|} \,\biggl[\,1\,\, +\, \dfrac{C}{b}\,|a-\beta|\,+ \frac{3 \,C}{2\, \omega }\,\,\biggr] 
\eeq

\vs \vs \no with     $C= 2 \varepsilon \,\,\zeta (2) / (\,e  L ^2\,). $

\vs \no  Then, one has the following theorem:

\begin{theorem}
The $\theta (x,t)$ function  defined in $( \ref{35})_1 $ satisfies the following inequalities:

\vs
\beq               \label{47}
\int_0^L |\theta (|x-\xi|,\,t)|\ \, d\xi \leq \,  ( 1\, +\, \sqrt b \,\pi \,t \, ) \,\,e^{- \omega \, t\,}
\eeq

\vs
\beq               \label{48}
   \int_0^t\,d \tau\, \int_0^L |\theta (|x-\xi|,\,t)|\ \, d\xi \leq \,  \beta_0\, \qquad \,  . 
\eeq

\vs \no Furthermore, it results: 

\beq              \label{49}
\lim _{t \to \infty}  \theta ( x, t ) \,\, = \,\,0; \quad \quad \int_0^ \infty \, | \theta ( x,\tau )| \,\, d \tau  \,\, \leq \,\,C_0,
 \eeq

 \vs  \no and

\vs 
\beq              \label{412}
\lim _{t \to \infty} \int _0^t \theta ( x, \tau ) \,\, d\tau \,\,= \frac{1}{2 \, \varepsilon \,\,\sigma _0\,  }\,\,\ \frac{\cosh \sigma_0 \,\,(x-L)}{\sinh\,(  \sigma_0 \, L) }
\eeq

\vs \vs \no  where $ \sigma_0 = \sqrt{\biggl(\,a\,\,+ \dfrac{b}{\beta}\biggr)\dfrac{1}{\varepsilon}}.$

\end{theorem}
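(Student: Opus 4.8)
The plan is to handle the four assertions with one common engine: the pointwise majorization (\ref{42}) of the fundamental solution, together with the image representation $\theta(x,t)=\sum_{n}K(x+2nL,t)$ in $(\ref{35})_1$. Inequalities (\ref{47}) and (\ref{48}) will follow from the whole-line estimates (\ref{45})--(\ref{46}) by an unfolding (method of images) argument; the pointwise decay and the $L^1(0,\infty)$ bound in (\ref{49}) will be obtained by splitting the series into its central term $n=0$ and the image terms $n\neq0$; and (\ref{412}) will be identified with the value at $s=0$ of the transform $\hat\theta$ already computed in (\ref{34}).

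For (\ref{47}) I would start from $|\theta(|x-\xi|,t)|\le\sum_n|K(|x-\xi|+2nL,t)|$ and integrate over $\xi\in[0,L]$. Since $K(r,t)$ is even in its spatial argument, summing over $n\in\mathbb{Z}$ while $\xi$ runs over a single period $[0,L]$ reconstitutes the integral of $|K(x-\cdot\,,t)|$ over the whole line, so that $\int_0^L|\theta(|x-\xi|,t)|\,d\xi\le\int_\Re|K(x-\xi,t)|\,d\xi$. Applying (\ref{45}) and bounding $e^{-at}\le e^{-\omega t}$ (legitimate because $\omega=\min(a,\beta)\le a$ by (\ref{44})) yields exactly $(1+\sqrt{b}\,\pi t)\,e^{-\omega t}$. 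The double estimate (\ref{48}) is then immediate: the same unfolding bounds its left-hand side by $\int_0^t d\tau\int_\Re|K(x-\xi,\tau)|\,d\xi$, which is $\le\beta_0$ by (\ref{46}).

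The substantive part is (\ref{49}). For the pointwise limit I would insert (\ref{42}) termwise, so that $|\theta(x,t)|\le\frac{1}{2\sqrt{\pi\varepsilon t}}\,[e^{-at}+btE(t)]\sum_n e^{-(x+2nL)^2/(4\varepsilon t)}$: the prefactor $t^{-1/2}[e^{-at}+btE(t)]$ tends to $0$ like $\sqrt{t}\,e^{-\omega t}$ (using $E(t)\le e^{-\omega t}/|a-\beta|$ from (\ref{41})), while the Gaussian sum stays $O(\sqrt{t})$, hence $\theta(x,t)\to0$. For the time integral I would write $\int_0^\infty|\theta(x,t)|\,dt\le\sum_n\int_0^\infty|K(x+2nL,t)|\,dt$ and separate $n=0$ from $n\neq0$. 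The term $n=0$, after $e^{-at}\le e^{-\omega t}$ and $E(t)\le e^{-\omega t}/|a-\beta|$, produces the Gamma integrals $\int_0^\infty t^{-1/2}e^{-\omega t}dt$ and $\int_0^\infty t^{1/2}e^{-\omega t}dt$, i.e. the first summand $\frac{1}{2\sqrt{\varepsilon\omega}}$ of $C_0$ and the leading factor $\frac{b\,\Gamma(3/2)\,\omega^{-3/2}}{2\sqrt{\pi\varepsilon}\,|a-\beta|}$ (the ``$1$'' inside the bracket of (\ref{411})). For $n\neq0$ I would convert Gaussian into algebraic decay by the elementary inequality $e^{-u}\le 1/(eu)$, that is $e^{-(x+2nL)^2/(4\varepsilon t)}\le 4\varepsilon t/[e(x+2nL)^2]$; this trades one power of $t$ for the factor $(x+2nL)^{-2}$, whose sum over $n\neq0$ is controlled by $\zeta(2)/L^2$ for $x\in[0,L]$, and the remaining integrals $\int_0^\infty t^{1/2}e^{-\omega t}dt$ and $\int_0^\infty t^{3/2}e^{-\omega t}dt$ reproduce, through the constant $C=2\varepsilon\zeta(2)/(eL^2)$, the two remaining terms $\frac{C}{b}|a-\beta|$ and $\frac{3C}{2\omega}$ in the bracket of (\ref{411}). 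I expect the bookkeeping of these image contributions --- keeping the $1/e$, the $\zeta(2)$ and the powers of $\omega$ aligned so that they assemble into precisely $C_0$ --- to be the main obstacle; everything else is a Gamma-function computation.

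Finally, for (\ref{412}), absolute integrability from (\ref{49}) guarantees that $\lim_{t\to\infty}\int_0^t\theta(x,\tau)\,d\tau=\int_0^\infty\theta(x,\tau)\,d\tau$ exists. Moreover, by Theorem~\ref{} the transform converges in $\Re e\,s>-\omega$, a region containing $s=0$, so this limit is exactly $\hat\theta(x,0)$ (continuity up to $s=0$ being secured by dominated convergence, since $|e^{-s\tau}\theta|\le|\theta|\in L^1(0,\infty)$). It then remains to set $s=0$ in the closed form (\ref{34}): there $\sigma(0)=\sqrt{a+b/\beta}=\sqrt{\varepsilon}\,\sigma_0$, whence $\sigma(0)/\sqrt{\varepsilon}=\sigma_0$ and $2\sqrt{\varepsilon}\,\sigma(0)=2\varepsilon\sigma_0$, and using that $\cosh$ is even I obtain $\frac{\cosh[\sigma_0(x-L)]}{2\varepsilon\sigma_0\sinh(\sigma_0 L)}$, which is (\ref{412}).
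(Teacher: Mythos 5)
Your proposal is correct and follows essentially the same route as the paper: unfolding the image series against the whole-line estimates (\ref{45})--(\ref{46}) to get (\ref{47})--(\ref{48}), combining the pointwise bound (\ref{42}) with a bound on the Gaussian sum (which is exactly where the constant $C=2\varepsilon\,\zeta(2)/(eL^2)$ and the Gamma integrals enter) to get (\ref{49}), and evaluating the closed form (\ref{34}) at $s=0$ to get (\ref{412}). The only cosmetic differences are that the paper bounds the full sum by $\dfrac{1+Ct}{2\sqrt{\pi\varepsilon t}}\,[\,e^{-at}+btE(t)\,]$ before integrating in $t$, rather than separating $n=0$ from $n\neq 0$ as you do, and it invokes the final-value theorem (\ref{418}) where you justify the same identification via $L^1$-integrability and dominated convergence.
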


\begin{proof}: 
We observ that  properties of $ K(x,t)$ imply that:

\vs 
\beq               \label{413}
\begin{split}
& \int_0^L\,|\theta (|x-\xi|,\,t)|\,\,d\xi\,\, \leq \,\sum_{n= -\infty }^\infty \, \, \int_0^L\,| K(|x-\xi +2nL| \,t)| \,d\xi\,\ \\\\& =\sum_{n= -\infty }^\infty \, \, \int_{x+(2n-1)L}^{x+2nL}\,| K(y,\,t)| \,dy\,\,\,\,\leq  \,\, \,\int_\Re\,\,|K(y,t)|\,d y \,
\end{split}
\eeq

\vs \no and so (\ref{47}) and  (\ref{48}) follow by (\ref{45}) and (\ref{46}).

\vs \no Moreover , it results  

\beq        \label{}
 \sum_{n=- \infty }^\infty \, e^{- \frac{(x+2nL)^2}{4\varepsilon t } }\,\,\leq\,\,1\,+\,  \dfrac {\,2\, t \,\varepsilon }{ e\, L^2} \,\,\zeta (2)  
\eeq

\vs \no and (\ref{42})implies:

\beq
| \theta (x,t) | \,= \dfrac {1\,+\,C \, t \, }{ 2\sqrt{\pi \, \varepsilon \, t }} \, [ \, e^{\,-\,a\, t}\, +\, b t \,E(t)\, ],  \,  
\eeq

\vs \no  consequently one obtains  $( \ref{49})_1$  while considered that

\vspace{3mm}
\beq
 \int^\infty_0 t^\mu  e^{- \omega t} dt = \dfrac{\Gamma (\mu +1)}{\omega ^{\mu +1}}\,\,\, 
\begin{array}
r re(\mu)>-1 \\
re(\omega)>0\,
 \end{array}\,\,;
\,\, \, \int^\infty_0   \dfrac{e^{-at}}{\sqrt{t}}  dt \, = \sqrt{\dfrac{\pi}{a}}\quad a>0,
 \eeq

\vs \vs\no  by means of  (\ref{42}),  $ (\ref{49})_2$ can be deduced. 

\vs \no Further as: 
\vs
\beq                \label{418}
\lim _{t \to \infty} \int _0^t \theta ( x, \tau ) \,\, d\tau \,\,= \lim_{s \to 0}\, \hat \theta( x,s)\qquad \quad\Re e  \,s > \,max(\,-\,a ,\,-\beta\,)\,, 
\eeq

\vs \vs  \no 
by (\ref{34}), one achieves(\ref{412}).

\end{proof}

\section{Asymptotic effects of the boundary data}

 In the following  we will have to refer to a known  theorem on asymptotic behaviour of convolutions. (\cite {b},p 66).

\vs{\em Let  $ h(t) $ and $ g(t) $  be two continuous functions on $ [0,\infty [.$   If they    satisfy the following  hypotheses 

\vs \no 
\beq  \label{hp}
\exists \,\, \displaystyle{\lim_{t \to \infty}}h(t) \, = \, h(\infty)\qquad\exists \,\, \displaystyle{\lim_{t \to \infty}}g(t) \, = \, g(\infty),
\eeq  
 
\vs \no 
\beq \label{hp2}
 \dot  g(t)\,  \in \, L_1  [ \,0, \infty),\eeq

\vs \no then, it results: 

\vs \no 
 \beq      \label{61}
\lim_{t \to \infty} \,\, \int_o^t \, h(t-\tau ) \, \dot g ( \tau ) \, d \tau \,\, = \, \,h(\infty) \,\, [\,\,g(\infty) - g(0)\,\,].
\eeq 
} 

\vs \no According to this, it is  possible to state:

\vspace{3mm}
\begin{theorem} \label{theorem asintotico}
Let  $ \varphi_ i  \,\,\ (i=1,2) \,\,$  be two  continuous functions  which converge for $ t \rightarrow \, \infty . $  In this case one has:  

\vs 
\beq    \label{62}
\lim_{t \to \infty } \,\int_0^t \,\theta \,(x,\tau)\,\,\, \varphi_i \,(t-\tau)\, \,d\,\tau \, = \, \varphi_{i, \infty} \,\,\,\,\,  \frac{1}{2 \, \varepsilon \,\,\sigma _0\,  }\,\,\ \frac{\cosh \sigma_0  \,\,(x-L)}{\sinh\  \sigma_0  \, L }
\eeq

\vs \vs \no where  $ \sigma_0 = \sqrt{\biggl(\,a\,\,+ \dfrac{b}{\beta}\biggr) \dfrac{1}{\varepsilon}}.$
\end{theorem}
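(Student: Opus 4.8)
The plan is to read $(\ref{62})$ as the statement that the convolution of the \emph{absolutely integrable} kernel $\theta(x,\cdot)$ against the convergent datum $\varphi_i$ tends to the product of the limit of $\varphi_i$ and the total mass of $\theta$. Accordingly I first set
\[
\varphi_{i,\infty}=\lim_{t\to\infty}\varphi_i(t),\qquad
\Phi(x)=\frac{1}{2\,\varepsilon\,\sigma_0}\,\frac{\cosh\sigma_0(x-L)}{\sinh\sigma_0 L},
\]
and record, from $(\ref{412})$, that $\int_0^t\theta(x,\tau)\,d\tau\to\Phi(x)$, while $(\ref{49})_2$ guarantees $\int_0^\infty|\theta(x,\tau)|\,d\tau\le C_0<\infty$. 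With this notation the target $(\ref{62})$ becomes $\int_0^t\theta(x,\tau)\,\varphi_i(t-\tau)\,d\tau\to\varphi_{i,\infty}\,\Phi(x)$.

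I would then split the datum about its limit, writing
\[
\int_0^t\theta(x,\tau)\,\varphi_i(t-\tau)\,d\tau
=\varphi_{i,\infty}\int_0^t\theta(x,\tau)\,d\tau
+\int_0^t\theta(x,\tau)\,[\varphi_i(t-\tau)-\varphi_{i,\infty}]\,d\tau .
\]
By $(\ref{412})$ the first term converges to exactly $\varphi_{i,\infty}\Phi(x)$, so everything reduces to showing that the remainder integral vanishes as $t\to\infty$.

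For the remainder I would use an $\eta$--$M$ splitting. Given $\eta>0$, continuity and convergence of $\varphi_i$ provide a bound $|\varphi_i(s)-\varphi_{i,\infty}|\le B$ for all $s\ge0$ and a threshold $M$ with $|\varphi_i(s)-\varphi_{i,\infty}|<\eta$ for $s\ge M$. For $t>M$ I cut the remainder at $\tau=t-M$: on $[0,t-M]$ one has $t-\tau\ge M$, so the integrand is dominated by $\eta\,|\theta(x,\tau)|$ and this piece is at most $\eta\int_0^\infty|\theta(x,\tau)|\,d\tau\le\eta\,C_0$ by $(\ref{49})_2$; on the window $[t-M,t]$ the oscillating factor is only bounded by $B$, giving a contribution $\le B\int_{t-M}^\infty|\theta(x,\tau)|\,d\tau$, which tends to $0$ because the tail of the convergent integral in $(\ref{49})_2$ does. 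Sending $t\to\infty$ and then $\eta\to0$ makes the remainder vanish, and $(\ref{62})$ follows.

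The delicate point is precisely this near-diagonal window $[t-M,t]$, where $\varphi_i(t-\tau)$ has no reason to be close to its limit; the argument goes through only because the mass of $\theta$ escapes to infinity, i.e. $\int_{t-M}^\infty|\theta(x,\tau)|\,d\tau\to0$, which is the quantitative content of $(\ref{49})_2$. I remark that the convolution lemma $(\ref{61})$ offers an alternative: an integration by parts replaces $\theta$ by its primitive $\Theta(x,\tau)=\int_0^\tau\theta(x,s)\,ds$ (with $\Theta(x,\infty)=\Phi(x)$ by $(\ref{412})$) and turns the left side of $(\ref{62})$ into $\Theta(x,t)\,\varphi_i(0)+\int_0^t\Theta(x,t-\tau)\,\dot\varphi_i(\tau)\,d\tau$, whereupon $(\ref{61})$ gives $\Phi(x)\varphi_i(0)+\Phi(x)[\varphi_{i,\infty}-\varphi_i(0)]=\varphi_{i,\infty}\Phi(x)$. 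That route, however, needs the stronger assumption $\dot\varphi_i\in L_1[0,\infty)$, whereas the direct splitting uses only the convergence of $\varphi_i$ hypothesized here.
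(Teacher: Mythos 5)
Your argument is correct, but it follows a genuinely different route from the paper's. The paper proves (\ref{62}) by a one-line application of the convolution lemma (\ref{61}): it takes $h=\varphi_i$ and $g(t)=\int_0^t\theta(x,\tau)\,d\tau$, so that $\dot g=\theta(x,\cdot)$, which lies in $L_1[0,\infty)$ by $(\ref{49})_2$, with $g(0)=0$ and $g(\infty)$ equal to the right member of (\ref{412}); the lemma then gives $\lim_{t\to\infty}\int_0^t\varphi_i(t-\tau)\,\theta(x,\tau)\,d\tau=\varphi_{i,\infty}\,[\,g(\infty)-0\,]$, which is exactly (\ref{62}). You instead give a self-contained $\eta$--$M$ splitting that in effect re-proves the lemma in this special case; it is elementary, needs no citation of \cite{b}, and makes transparent where each hypothesis enters (absolute integrability of $\theta$ controls the far range $[0,t-M]$ and the tail, boundedness of $\varphi_i$ controls the near-diagonal window). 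Both are valid: the paper buys brevity, you buy self-containedness. One correction, though, to your closing remark: the lemma route does \emph{not} require $\dot\varphi_i\in L_1[0,\infty)$ and involves no integration by parts. That requirement would arise only if one insisted on casting $\varphi_i$ in the differentiated role of $g$; the paper avoids this by swapping the roles, putting $\theta$ --- the factor already known to be integrable by $(\ref{49})_2$ --- in the slot of $\dot g$ and the merely convergent $\varphi_i$ in the slot of $h$. The hypothesis $\dot\varphi_i\in L_1$ is reserved by the paper for Theorem \ref{th62}, where the roles are assigned the other way around ($h=\theta$, $g=\varphi_i$) and that assumption is genuinely needed.
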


\vs \begin{proof}
Let apply (\ref{61}) with  $ g= \int _0^t \theta(x,\tau) d\tau \, \,\,\mbox{and}\,\, f = \varphi_i  \,\,(i=1,2) $. Then, (\ref{62}) follows by $( \ref{49})_2 $ and (\ref{412}).

\end{proof}

\vspace{3mm}

\begin{theorem}\label{th62}
When the data  $\varphi _ i \,\,\, ( i=1,2) $ verify conditions  (\ref{hp}) (\ref{hp2}),  it results: 

\beq    \label{63}
\lim_{t \to \infty } \, [\,\theta (x,t)\,\,\,\, \ast \,\, \varphi_i (t) \, ]\, = \,0\,  \qquad ( i= 1, 2) 
\eeq

\end{theorem}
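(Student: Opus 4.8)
The plan is to invoke once more the convolution result (\ref{61}), exactly as in the proof of Theorem~\ref{theorem asintotico}, but to exploit the extra hypothesis (\ref{hp2}) in order to interchange the roles played by the kernel $\theta$ and by the datum $\varphi_i$. Recall that the boundary effect in question is the convolution $\int_0^t\theta(x,t-\tau)\,\varphi_i(\tau)\,d\tau$, and that by $(\ref{49})_{1}$ one already knows $\lim_{t\to\infty}\theta(x,t)=0$. In Theorem~\ref{theorem asintotico} the kernel occupied the ``derivative'' slot of (\ref{61}) through $g=\int_0^t\theta\,d\tau$, whose integrability is granted by $(\ref{49})_{2}$ and whose limit is recorded in (\ref{412}), while it was the datum $\varphi_i$ that sat in the slot carrying the terminal value $h(\infty)$; this is precisely why a nonzero steady contribution survived there. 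The idea now is to place $\theta$ itself in the slot that carries the terminal value.

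Concretely, I would apply (\ref{61}) with $h(t)=\theta(x,t)$, so that $h(\infty)=0$ by $(\ref{49})_{1}$, the second factor being governed by $\varphi_i$: hypothesis (\ref{hp}) furnishes its limit, and (\ref{hp2}), i.e.\ $\dot\varphi_i\in L_1$, furnishes the regularity needed to legitimate the identification of $\varphi_i$ with the differentiated factor of (\ref{61}). Once the pair is matched in this order, the right-hand side of (\ref{61}) carries the factor $h(\infty)=0$, and the limit of the whole convolution collapses to $0\cdot[\,\cdots\,]=0$, which is exactly the assertion (\ref{63}). The mechanism is thus the decay of $\theta$ at infinity, promoted by (\ref{hp2}) from an inert estimate into the factor that annihilates the effect; this is the structural contrast with Theorem~\ref{theorem asintotico}, where $\theta$ contributed the surviving steady term $\sigma_0^{-1}\cosh\sigma_0(x-L)/\sinh\sigma_0 L$.

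The step I expect to be the real obstacle is twofold. First, $\theta(x,t)$ is not continuous at the origin: the estimate underlying (\ref{49}) gives $|\theta(x,t)|\sim(2\sqrt{\pi\varepsilon\,t})^{-1}$ as $t\to0^{+}$, an integrable singularity that is incompatible with the continuity hypothesis of (\ref{61}) if $\theta$ is used naively in the $h$-slot; I would dispose of it by the splitting $\int_0^t=\int_0^{\delta}+\int_{\delta}^{t}$, bounding the first piece uniformly in $t$ by the $L_1$ estimates $(\ref{49})_{2}$ and (\ref{48}) and then letting $\delta\to0$. Second, and more delicate, is verifying that the pair $(\theta,\varphi_i)$ genuinely satisfies the hypotheses of (\ref{61}) in this order, rather than the reversed order of Theorem~\ref{theorem asintotico}; it is exactly here that the integrability of the perturbation encoded in (\ref{hp2}) is indispensable, since it is what entitles $\theta$ to be the undifferentiated, limit-carrying factor. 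Confirming this role assignment rigorously is, I expect, where the argument must be made with care, the singularity at $t=0$ being by comparison a purely technical nuisance tamed by the $L_1$ bounds already available.
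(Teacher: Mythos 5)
Your proposal is essentially the paper's own proof: the paper likewise applies (\ref{61}) with $h=\theta(x,t)$ and $g=\varphi_i$, so that $(\ref{49})_1$ gives $h(\infty)=0$ and the limit of the convolution collapses to $0\cdot[\varphi_i(\infty)-\varphi_i(0)]=0$. Note that, exactly as in the paper's proof, what this role assignment literally controls is $\theta\ast\dot\varphi_i$ (Berg's theorem places the derivative on the $g$-slot), so your identification of the result with (\ref{63}) rests on the same implicit reading of the statement that the paper uses; your extra handling of the $t\to 0^{+}$ singularity of $\theta$ is a technical refinement the paper omits.
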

\vs \begin{proof}

It sufficies to put  $ h= \theta(x,t)  $ and $ g= \varphi_i $ and to apply $(\ref{49})_1$.

\end{proof}

\section{Asymptotic  behaviour of the FHN solution}

Let us denote with $\, f_1 \, \ast
f_2 \, $ the convolution

\[ f_1 ( \cdot, t) \ast \, f_2 ( \cdot ,t) \, = \int_0^t \, f_1 ( \cdot, t)  \, f_2 \,\,( \cdot , t -\tau)  \,d\,\tau \]

\vs \no and  let $ N(x,t) $ be the following known function depending on the data 
$( u_0, v_0, \varphi_1,  \varphi_2)$

\beq    \label{51}
\begin{split}
N(x,t)\, =\,
\\&  -2 \,\varepsilon \, \varphi_1 (t) \, \ast \, \theta (x,t) \,
 + \, 2\, \varepsilon \, \varphi_2 (t) \, \ast \, \theta ( L - x , t) \, 
\\ \\&+\,\int^L_0 \,  \,u_0\,(\xi)\,\, G( x, \xi,t) \, d\xi  \, - \, e^{\,-\, \beta\, t\, } \, \ast  \, \int^L_0 \,  v_0( \xi)\,\, G( x, \xi,t) \, d\xi \,.
\end{split} 
\eeq

\vs \no Owing to  (\ref{25}),  (\ref{27}) and (\ref{36}),the solution related to the initial boundary FHN system \ref{21}-\ref{24} is given by \cite{mda2010}:

 \vs \vs 
\beq     \label{71}
 u (x,t) = \,\int _0^L \, G \, (\, x, \xi  , t-\tau) \,\, \ast \,\{ \,u^2\,(\,\xi ,\,\,\tau\,) [\,a+1\,-u\,(\,\xi ,\,\,\tau\, ) \,]\}\,\,d\xi\, \,+\, N
\eeq

\vspace{6mm}
\beq      \label{72}
\begin{split}
v\,(x,t) \, = &\,\, v_0 \, e^{\,-\,\beta\,t\,} \,+\, b\,  e^{\,-\, \beta\, t\, } \, \ast  \, \,N(x,t)
\\ \\&  +\,b\,\,\, e^{\,-\, \beta\, t\, } \, \ast  \,\int _0^L \, G \, (\, x, \xi  , t-\tau) \,\, \ast \,\{ \,u^2\,(\,\xi ,\,\,\tau\,) [\,a+1\,-u\,(\,\xi ,\,\,\tau\, ) \,]\}\,\,d\xi\, 
\end{split}
\eeq

 \vs \no These formulae represent two  integral equations for $\, u\,$ and $ v\,. $ By means of the estimates deduced in sec.4 it is possible to apply the fixed point theorem in order to obtain existence and uniqueness results\cite{c,mda2010,dmm}. When the Nagumo polinomial (\ref{25}) is approximated by means of its linear part, then (\ref{71}) (\ref{72})  give the explicit solution of the problem.

\vs As for the analysis and the stability of solutions of nonlinear binary reaction - diffusion
systems of PDE's, as well as the existence of global compact attractors, there exists   a large bibliography  .  (see e. g. \cite{drw,i,l,lgns,r}. Moreover, as it is well known,the (FHN) system admits arbitrary large invariant rectangles
$ \Sigma $ containing $(0,0)$ so that the solution $(u,v)$, for all times $t > 0$, lies in the interior of $ \Sigma $
 when the initial data $(u_o,v_o)$ belong to $ \Sigma $.\cite{s}

\no So, letting

\[
\|\,F\,\| \,
= 
\sup _{ \Omega_T\,}\, | \,\,u^2  \,(a+1-u)\,\,|, \, 
\]
\vs
\[
\|\,u_0\,\| \,
= \displaystyle \sup _{ \Omega_T\,}\, | \,\,u_0 \,(\,x,) \,| ;\quad\quad \|\,v_0\,\| \,= \displaystyle \sup _{ \Omega_T\,}\, | \,v \,(\,x) \,\,| \, \]

\vs \no one has:

\begin{theorem}  \label{the71}
For regular solution $ (u,v) $ of the (FHN) model, when  the boundary conditions are homogeneous, ( $\varphi_1\,\,= \varphi_2 \,= \,0\, \, $), the following estimates hold:

\vs 
\beq            \label{74}
\left\{ 
 \begin{array}{lll}                                                   
 \left| u \, \right| \, \leq  2\,[\,\left\| u_0 \right\| \, (1+\pi \sqrt b \, t ) \, e^ {\,-\omega\,t\,}\,+\,\left\| v_0 \right\|\,E(t) \, +\, \beta_0 \,\left\| F \right\|\,] 
   \\
\\
\left| v \, \right| \, \leq  \left\| v_0 \right\|\, e^ {\,-\,\beta\,t\,}\,+\,2\,\bigl[\,b\,(\,\left\| u_0 \right\|\,+\, t\, \left\| v_0 \right\|\,) \, E(t) \, + \, \dfrac{b}{a \beta}\, \left\| F \right\| \,\bigr]
\\ 
   \end{array}
  \right.
 \eeq

\end{theorem}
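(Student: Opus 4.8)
The plan is to read both estimates directly off the integral representations, bounding each term by the kernel inequalities of Section~4. I would start from (\ref{71}) with $\varphi_1=\varphi_2=0$ and recall from (\ref{25}) that the source splits as $F=u^2(a+1-u)-v_0e^{-\beta t}$, so that $u$ becomes the sum of three contributions:
\[
u=\int_0^L G(x,\xi,t)\,u_0(\xi)\,d\xi+\int_0^t\!\int_0^L G(x,\xi,t-\tau)\,u^2(a+1-u)\,d\xi\,d\tau-\int_0^t\!\int_0^L G(x,\xi,t-\tau)\,v_0(\xi)\,e^{-\beta\tau}\,d\xi\,d\tau .
\]

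For the first piece I would pull out $\|u_0\|$ and use $G=\theta(|x-\xi|,\cdot)+\theta(|x+\xi|,\cdot)$ together with (\ref{47}) applied to each summand; by the same telescoping argument as in (\ref{413}) both integrals are dominated by $\int_{\Re}|K|\,dy$ and hence obey (\ref{47}), giving $2\,\|u_0\|(1+\pi\sqrt b\,t)\,e^{-\omega t}$. The nonlinear piece is controlled by $\|F\|$ and the double space-time integral (\ref{48}), again doubled because of the two $\theta$'s, yielding $2\beta_0\|F\|$. The relaxation piece is where $E$ enters: bounding $\int_0^L|G(\cdot,t-\tau)|\,d\xi$ by the leading diffusive part $2\,e^{-a(t-\tau)}$ of the kernel estimate (\ref{42}) and invoking the convolution identity $e^{-at}\ast e^{-\beta t}=E(t)$, which is exactly (\ref{41}), the term is at most $2\,\|v_0\|\,E(t)$. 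Summing the three gives the first line of (\ref{74}).

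For the component $v$ I would work from (\ref{72}) rather than re-substituting the $u$-bound, since the source contribution is cleaner there. With homogeneous data $N$ reduces to its $u_0$ and $v_0$ parts, so $v$ is a sum of four convolutions. The term $v_0e^{-\beta t}$ gives $\|v_0\|e^{-\beta t}$. The term $b\,e^{-\beta t}\ast\!\int_0^L u_0G\,d\xi$ is handled as above via $e^{-\beta t}\ast e^{-at}=E(t)$, giving $2b\|u_0\|E(t)$. The double-relaxation term $-b\,e^{-\beta t}\ast e^{-\beta t}\ast\!\int_0^L v_0G\,d\xi$ uses $e^{-\beta t}\ast e^{-\beta t}=t\,e^{-\beta t}$ and then, bounding the surviving linear weight by $t$, $\big(t\,e^{-\beta t}\big)\ast e^{-at}\le t\,E(t)$, giving $2bt\|v_0\|E(t)$. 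Finally the nonlinear term $b\,e^{-\beta t}\ast\!\int_0^LG\ast u^2(a+1-u)\,d\xi$ is bounded by $\|F\|$ times $b\,e^{-\beta t}\ast\big(\int_0^t\!\int_0^L|G|\big)$; using the leading part $\int_0^s 2e^{-a(s-\tau)}\,d\tau\le 2/a$ and then $e^{-\beta t}\ast 1\le 1/\beta$ produces $\frac{2b}{a\beta}\|F\|$. Collecting the four contributions is the second line of (\ref{74}).

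The routine but delicate point—and the step I expect to be the main obstacle—is the convolution bookkeeping: one must decide which of the kernel bounds (\ref{42}), (\ref{45}), (\ref{47}), (\ref{48}) to insert in each term so that the polynomial-times-exponential factors collapse exactly into the quantities $E(t)$, $t\,E(t)$ and $\frac{1}{a\beta}$ of the statement. In particular the passage $\big(t\,e^{-\beta t}\big)\ast e^{-at}\le t\,E(t)$ for the double-relaxation term, and the identification of $E$ with $e^{-at}\ast e^{-\beta t}$, are what force the diffusive rate $a$ (rather than the coarser $\omega=\min(a,\beta)$ of (\ref{47})) to be tracked in the relaxation and source contributions, while the initial-data contribution retains the $\omega$-rate; keeping these two rates separate without over- or under-counting the $b$-dependent correction is the real bookkeeping effort.
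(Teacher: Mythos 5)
Your overall strategy --- setting $\varphi_1=\varphi_2=0$ in (\ref{71})--(\ref{72}), splitting $u$ and $v$ into the initial-data terms, the relaxation terms and the nonlinear term, and bounding each piece by the kernel estimates of Section 4 --- is the natural one; note that the paper itself states Theorem \ref{the71} without giving any proof (the estimates are carried over from the cited references), so this reconstruction is exactly what is asked for. Your treatment of the $u_0$-term (doubling (\ref{47}) over the two summands of $G$, via the same telescoping as (\ref{413})) and of the nonlinear term (via (\ref{48}), giving $2\beta_0\|F\|$) is correct, as are the convolution identities $e^{-at}\ast e^{-\beta t}=E(t)$, $e^{-\beta t}\ast e^{-\beta t}=t\,e^{-\beta t}$ and the inequality $(t\,e^{-\beta t})\ast e^{-at}\le t\,E(t)$.

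The gap is precisely in the step you yourself flag as the delicate one: you bound $\int_0^L|G(x,\xi,t)|\,d\xi$ by ``the leading diffusive part'' $2e^{-at}$ of (\ref{42}). That is not a legitimate upper bound. Estimate (\ref{42}) reads $|K|\le \frac{e^{-r^2/4t}}{2\sqrt{\pi\varepsilon t}}\,[\,e^{-at}+b\,t\,E(t)\,]$, a sum of two \emph{positive} terms, and the Bessel correction $b\,t\,E(t)$ cannot be discarded when $|K|$ sits inside an absolute value; after integration it contributes at the same order as the term you keep, which is why (\ref{45}) and (\ref{47}) only give the rate $(1+\sqrt b\,\pi t)\,e^{-\omega t}$ and not $e^{-at}$. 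Consequently, all four places where you invoke this reduction --- the $\|v_0\|E(t)$ term in the $u$-estimate, and the $b\|u_0\|E(t)$, $b\,t\,\|v_0\|E(t)$ and $\frac{b}{a\beta}\|F\|$ terms in the $v$-estimate --- do not follow from the bounds available in the paper: what honestly follows is, for instance, $2\|v_0\|\,e^{-\beta t}\ast\bigl[(1+\sqrt b\,\pi t)e^{-\omega t}\bigr]$ in place of $2\|v_0\|E(t)$, and $\frac{2b\beta_0}{\beta}\|F\|$ in place of $\frac{2b}{a\beta}\|F\|$, both strictly weaker. To close the gap you would need a genuinely sharper kernel property --- e.g.\ nonnegativity of $K$ together with nonnegativity of the Bessel term in (\ref{31}), which would yield $\int_\Re|K|\,d\xi\le e^{-at}$, or a direct estimate of $e^{-\beta t}\ast\int_\Re|K|\,d\xi$ through its Laplace transform $1/[(s+a)(s+\beta)+b]$ --- and neither is established in the paper nor in your argument; such refinements are presumably what the cited works \cite{dr08}, \cite{mda2010} supply.
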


\vs \no  For  boundary data  different from zero,the asymptotic behaviour of  the solution $ (u,v) $ of FHN system is   established by  theorems \ref{theorem asintotico} and \ref{th62}.

 \vs \vs {\bf In conclution}. When   $ t $ tends to infinity,  the effect due to the initial disturbances $\, (\,u_0, v_0\,) \, $  vanishes  while   the effect of the non linear source is bounded for all $ t. $ Moreover, also the effects determined by boundary disturbance  $ \varphi_1, \varphi_2 $ are vanishing in the hypotheses $ (b) $. Otherwise, they are always bounded.

%For acknowledgements section, please don't number the section, please begin it with \section*{Acknowledgements}
\vspace{3mm}\section*{Acknowledgments} This  work has been performed under the auspices of Programma F.A.R.O. (Finanziamenti per l' Avvio di  Ricerche
Originali, III tornata) ``Controllo e stabilita' di processi diffusivi nell'ambiente'', Polo delle Scienze e Tecnologie, Universita' degli Studi di Napoli Federico II  (2012).

% You may incorporate your references as follows in your main tex file.
% Using BibTex is not recommended but can be handled.

\end{document}